\newtheoremstyle{iostuff}%
{0pt}%
{0pt}%
{\hangindent=\parindent}%
{}%
{\itshape}%
{:}%
{.5em}%
{}%
\numberwithin{equation}{section}
\numberwithin{figure}{section}
\newcommand{\conc}{\circ}
\newcommand{\hopg}{H}
\newcommand{\hopgt}{hop graph}
\newcommand{\tour}{\ensuremath{P}}
\newcommand{\vc}{\tau}
\newcommand{\N}{\ensuremath{\mathbb N}}
\newcommand{\cost}{\ensuremath{\omega}}
\newcommand{\w}{\ensuremath{\cost}}
\newcommand{\wh}{\ensuremath{\cost^H}}
\newcommand{\gtsp}{GTSP}
\newcommand{\gtsplong}{Graphical TSP}
\newcommand{\fS}{[\mathcal{S}]}
\theoremstyle{definition}
\newtheorem{theorem}{Theorem}[section]
\newtheorem{problem}[theorem]{Problem}
\newtheorem{lemma}[theorem]{Lemma}
\newtheorem{definition}[theorem]{Definition}
\newtheorem{remark}[theorem]{Remark}
\newtheorem{rrule}[theorem]{Reduction Rule}
\theoremstyle{iostuff}
\newtheorem*{probinstance}{Input}
\newtheorem*{probquest}{Question}
\crefname{rrule}{Reduction Rule}{Reduction Rules}
\crefname{problem}{Problem}{Problems}
\crefname{observation}{Observation}{Observations}
\crefname{construction}{Construction}{Constructions}
\crefname{figure}{Figure}{Figures}
\crefname{conjecture}{Conjecture}{Conjectures}
\date{}
\begin{document}
\begin{frontmatter}
  \title{A quadratic\hyp order problem kernel for the traveling salesman problem
    parameterized by the vertex cover number\tnoteref{t1}}
  \author[hw]{René van Bevern}
  \ead{rene.van.bevern@huawei.com}

  \address[hw]{Huawei Technologies Co., Ltd., Novosibirsk, Russian Federation}
  \tnotetext[t1]{The results presented in this work were first presented in 2021 in the second author's Bachelor thesis
    \citep{Ska21} and are unrelated to the first author's work at Huawei.}
  \author[mipt]{Daniel A.\ Skachkov}
  \ead{skachkov.da@phystech.edu}

  \address[mipt]{Moscow Institute of Physics and Technology, Moscow, Russian Federation}
  
  \begin{abstract}
    The NP\hyp hard
    graphical traveling salesman problem (GTSP)
    is
    to find a closed walk of total minimum weight
    that visits each vertex in
    an undirected edge\hyp weighted and
    not necessarily complete graph.
    We present a problem kernel with $\vc^2+\vc$ vertices for GTSP,
    where $\vc$ is the vertex cover number of the input graph.
    Any $\alpha$\hyp approximate solution
    for the problem kernel also gives an $\alpha$\hyp approximate solution
    for the original instance,
    for any $\alpha\geq 1$.
  \end{abstract}
  \begin{keyword}
    NP-hard problem\sep
    parameterized complexity\sep
    kernelization\sep
    preprocessing\sep
    data reduction
  \end{keyword}
\end{frontmatter}

\thispagestyle{empty}

\section{Introduction}

\noindent
\citet{BCK+22} have recently studied data reduction with performance guarantees for the
following NP\hyp hard
variant of the traveling salesman problem (TSP).
\begin{problem}[\gtsplong{} (\gtsp)]
  \begin{probinstance}
    An undirected graph~$G=(V,E)$ with edge weights $\cost\colon E\to\N$, and~$W\in \N$.
  \end{probinstance}
  \begin{probquest}
    Find a closed walk containing all vertices of~$V$
    at least once
    and having total edge weight at most~$W$.
  \end{probquest}
\end{problem}

\noindent
Among other results,
\citet{BCK+22} have shown that
any \gtsp{} instance
can be polynomial\hyp time
reduced to an equivalent instance
with $O(\vc^3)$~vertices,
$O(\vc^4)$~edges,
and total bit\hyp size $O(\vc^{16})$,
where $\vc$~is the vertex cover number of~$G$.
In terms of parameterized complexity theory,
they have shown a \emph{problem kernel} \citep{FLSZ19}.
In this work,
we show a smaller problem kernel:

\begin{theorem}\label{thm}
  \gtsp{} admits a problem kernel with
  \begin{enumerate}[(i)]
  \item\label{thm:1} at most   $\vc^2+\vc$ vertices,
    
  \item\label{thm:2} at most $2\vc^3-\vc$ edges,
    
  \item\label{thm:3} total bit\hyp size $O(\vc^{12})$.
  \end{enumerate}
\end{theorem}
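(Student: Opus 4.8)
The plan is to exploit a minimum vertex cover $C$ of size~$\vc$, whose complement $I=V\setminus C$ is an independent set, so that every edge incident to a vertex of~$I$ runs into~$C$. First I would record a structural observation: viewing a feasible closed walk as a connected spanning multigraph in which every vertex has even degree, one may assume that each $v\in I$ has degree exactly two. Indeed, any $v\in I$ of degree at least four admits the removal of a suitable pair of incident edges without destroying evenness, and since all of~$v$'s neighbours lie in~$C$, connectivity can be restored without increasing the weight. Hence every $v\in I$ is used either as a \emph{pendant} (one edge to a single $c\in C$ taken twice, at marginal cost $2\cost(cv)$) or as a \emph{\pt} (one edge to each of two cover vertices $c_1,c_2$, contributing $\cost(c_1v)+\cost(vc_2)$ while simultaneously connecting $c_1$ and~$c_2$).

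This dichotomy suggests encoding the connectivity contribution of~$I$ in a \hopgt~$\hopg$ on the cover~$C$: for each ordered pair $(c_i,c_j)\in C\times C$, the off\hyp diagonal case records a \pt{} of weight $\wh(c_ic_j)=\min_{v\in I}\bigl(\cost(c_iv)+\cost(vc_j)\bigr)$ realised by the cheapest through\hyp vertex, while the diagonal case $i=j$ models a pendant at~$c_i$. The reduction then keeps, for each of these $\vc^2$ ordered pairs, the single representative of~$I$ attaining the minimum, deletes every other vertex of~$I$, and absorbs the pendant cost $2\min_{c\in C}\cost(cv)$ of each deleted vertex into a \pot{}~$\Pi$. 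Retaining $\vc$ cover vertices and $\vc^2$ representatives yields the vertex count of part~(i), and keeping for each retained vertex its edges into the cover gives the $O(\vc^3)$ edge count matching part~(ii) of~\cref{thm}.

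For correctness I would show that the optimum shifts by exactly~$\Pi$, i.e.\ $\OPT_{\text{orig}}=\OPT_{\text{ker}}+\Pi$, via an exchange argument. Starting from an optimal walk of the original \gtsp{} instance, I route every deleted $v\in I$ as a pendant at its cheapest cover vertex (paying precisely its contribution to~$\Pi$); wherever the walk used a deleted vertex as a \pt{} between some $c_i,c_j$, I reroute through the retained representative of $(c_i,c_j)$, which by the definition of~$\wh$ is no more expensive; the reverse lift reinserts the deleted pendants. Approximation preservation is then immediate: a kernel walk~$S$ with $\cost(S)\le\alpha\OPT_{\text{ker}}$ lifts to one of cost $\cost(S)+\Pi\le\alpha\OPT_{\text{ker}}+\alpha\Pi=\alpha\OPT_{\text{orig}}$, using $\alpha\ge1$ and $\Pi\ge0$. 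The \emph{main obstacle} I anticipate is the hard direction of the exchange argument --- rerouting through retained representatives while respecting the global even\hyp degree and connectivity constraints, since a deleted \pt{} may be the only link holding two parts of the walk together, so the rerouting must preserve connectivity and not merely local cost.

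Finally, for part~(iii) the weights of the reduced instance may still be large, so I would compress them by applying the Frank--Tardos weight\hyp reduction theorem to the $O(\vc^3)$\hyp edge kernel: this replaces every weight by an equivalent one of bit\hyp length $O\bigl((\vc^3)^3\bigr)=O(\vc^9)$ while preserving the relative order of all feasible solution values, hence also $\OPT$ and all approximation ratios. Summed over the $O(\vc^3)$ edges this gives total bit\hyp size $O(\vc^{12})$, as claimed.
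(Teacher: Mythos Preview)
Your high\hyp level decomposition (cover vs.\ independent set, hops vs.\ pendants, penalty shift~$\Pi$) matches the paper, but the selection rule you propose---keep \emph{one cheapest} through\hyp vertex per ordered pair---is not sufficient, and this is precisely where your ``hard direction'' breaks. Take $C=\{c_1,c_2\}$, $I=\{v_1,v_2,v_3\}$ with all $C$--$I$ edges present, $\cost(c_1v_j)=\cost(c_2v_j)=j$, and $\cost(c_1c_2)=100$. Then $v_1$ is the unique minimiser for every pair, so your rule deletes $v_2,v_3$ and sets $\Pi=2\cdot2+2\cdot3=10$. The kernel optimum on $\{c_1,c_2,v_1\}$ is $4$ (the walk $c_1v_1c_2v_1c_1$), so your rule predicts $\OPT_{\text{orig}}=14$; but the original optimum is $12$ via $c_1v_1c_2v_2c_1v_3c_1$. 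Your exchange step fails exactly here: replacing the hop $(c_1,v_2,c_2)$ by $(c_1,v_1,c_2)$ saves $2$, but you must \emph{additionally} insert a pendant for $v_2$ at cost $4$, a net increase. The inequality you need is $\cost(c_ir)+\cost(rc_j)\le\cost(c_iv)+\cost(vc_j)-2\cost_{\min}(v)$, and minimising the raw hop cost does not give you the extra $-2\cost_{\min}(v)$.

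The paper avoids this by a genuinely different selection mechanism: it builds a bipartite \hopgt\ between $X=(C)_2$ and $Y=V\setminus C$, weights each edge by the \emph{penalty} $\cost(uy)+\cost(yv)-2\cost_{\min}(y)$ rather than the raw hop cost, and keeps the $Y$\hyp endpoints of a minimum\hyp cost \emph{maximum\hyp cardinality matching}~$M^*$. The matching guarantees that distinct pairs are served by distinct retained vertices, and the correctness proof (the analogue of your hard direction) is an alternating\hyp path argument: starting from an optimal tour with the fewest hops, any deleted vertex used as a hop yields an $M^*$\hyp augmenting\hyp style path in the \hopgt, along which hops can be shifted onto retained vertices at no extra cost because $M^*$ has minimum penalty among maximum matchings. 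This matching\hyp plus\hyp alternating\hyp path idea is the key technical ingredient you are missing; greedily picking one representative per pair cannot support the argument. A secondary issue: you assume access to a minimum vertex cover of size~$\vc$, which is NP\hyp hard to compute; the paper works with a $2$\hyp approximate cover (the stated bounds should be read with respect to that cover's size).
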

\begin{remark}
  \label[remark]{rem:1apx}
  In terms of \citet{LPRS17},
  our problem kernel is \emph{1\hyp approximate};
  that is,
  for any $\alpha\geq1$,
  any $\alpha$\hyp approximate
  solution for the problem kernel
  can be lifted to an $\alpha$\hyp approximate
  solution for the input \gtsp{} instance in polynomial time.
\end{remark}
\noindent
For proving \cref{thm},
we first introduce some notation in \cref{sec:prelims}.
Then,
we present our data reduction rule in \cref{sec:dr},
bound its result size in \cref{sec:size},
and,
finally,
prove its correctness in \cref{sec:correct}.

\section{Preliminaries}
\label{sec:prelims}
\noindent
For a set~$V$, we denote
by $(V)_2 := (V \times V) \setminus \{(v, v) \mid v \in V\}$ the set of ordered pairs of distinct elements.

\paragraph{Undirected graphs}
We consider simple
undirected graphs~$G=(V,E)$
with a set~$V(G):=V$ of \emph{vertices}
and
a set~$E(G):=E\subseteq \{\{u,v\}\subseteq V\mid u\ne v\}$
of \emph{edges}.
Unless stated otherwise,
$n$~denotes the number
of vertices (its \emph{order})
and $m$~denotes the number of edges.
By $N(v):=\{u\in V\mid \{u,v\}\in E\}$,
we denote the \emph{(open) neighborhood} of~$v$.
A \emph{vertex cover}~$C\subseteq V$
is a subset of vertices
such that each edge has at least one endpoint in~$C$.
The minimum cardinality of any vertex cover is denoted
by~$\vc$.
A~matching~$M\subseteq E$ is a set of
edges that do not share endpoints.
  
\paragraph{Walks and tours}
A \emph{walk from vertex~$v_0$ to vertex~$v_\ell$}
in a graph~$G=(V,E)$ is a
sequence~$w=(v_0,v_1,\allowbreak\dots,v_\ell)$
such that $\{v_{i-1},v_{i}\}\in E$
for each~$i\in\{1,\dots,\ell\}$.
Vertices on a walk may repeat.
The \emph{weight} of walk~$w$
is $\w(w):=\sum_{i=1}^\ell\w(\{v_{i-1},v_i\})$.
If $v_0=v_\ell$,
then we call~$w$ a \emph{closed walk}.
A~\emph{subwalk~$w'$} of~$w$ is
any subsequence~$w'$ of~$w$.
The \emph{reversed walk}~$w^R$ of~$w$
is obtained simply by reversing
the sequence of vertices in~$w$.
For two walks~$w_1=(v_0,\dots,v_\ell)$
and $w_2=(v_0',\dots,v'_k)$ such that
$v_\ell=v_0'$,
we denote their concatenation by
$w_1\conc w_2:=(v_0,\dots,v_\ell=v_0',\dots,v'_k)$.

\paragraph{TSP tours}
We call a closed walk containing
all vertices of a graph~$G=(V,E)$
a \emph{TSP tour}.
We call a TSP tour of minimum weight an \emph{optimal}
TSP tour.
For $\alpha\geq 1$,
an $\alpha$\hyp approximate TSP tour
is a TSP tour whose weight exceeds the minimum weight
by at most a factor~$\alpha$.

\paragraph{Kernelization}
Kernelization is a %
formalization
of data reduction
with provable performance guarantees~\citep{FLSZ19}.

A \emph{parameterized problem}
is a pair
$(\Pi,\kappa)$
where $\Pi\subseteq\{0,1\}^*$ is a decision problem
and $\kappa\colon\{0,1\}^*\to\N$ is 
a polynomial\hyp time computable function
called a \emph{parameterization}.
A~\emph{kernelization} for a parameterized problem~$(\Pi,\kappa)$
is a polynomial\hyp time algorithm
that maps any instance~$x\in\{0,1\}^*$
to an instance~$x'\in\{0,1\}^*$
such that $x\in \Pi \iff x'\in \Pi$ and such that
$|x'|\leq g(\kappa(x))$ for some computable function~\(g\).
We call \(x'\) the \emph{problem kernel}
and \(g\) its \emph{size}.

\section{Data reduction}
\label{sec:dr}
\noindent
Our data reduction rule for \gtsp{} is based
on the following observation:
Let $C\subseteq V$ be a vertex cover of~$G=(V,E)$.
Assume there is a TSP tour~$\tour$ in~$G$.
Since any vertex~$s\in V\setminus C$
has neighbors only in~$C$,
the tour~$\tour$ can traverse~$s$ either via
a subwalk $(u,s,v)$ for $(u,v)\in (C)_2$,
which we call a \emph{hop},
or via a subwalk $(w,s,w)$ for~$w\in C$,
which we call a \emph{loop}.
In the latter case,
one can assume that
$w$~is a vertex~$w\in N(s)\subseteq C$
with minimum~$\w(\{s,w\})$.

\looseness=-1
We will show that
there is an optimal TSP tour~$\tour$
such that,
for each $(u,v)\in (C)_2$,
there is at most one vertex~$s\in V\setminus C$
traversed by~$\tour$ via a hop~$(u,s,v)$.
We will find a superset~$S$ of these vertices
and delete all others,
since they will be traversed using loops of
easily computable costs.
The superset~$S$ will be found by matching vertices in $V\setminus C$
to hops in the following \emph{\hopgt}.

\begin{definition}
  The \emph{\hopgt}~$\hopg$
  for a graph~$G=(V,E)$ with vertex cover~$C\subseteq V$
  is an edge\hyp weighted
  bipartite graph whose vertex set is partitioned into
  two sets
  \begin{align*}
    X&:=(C)_2,\\
    Y&:=V \setminus C.
  \end{align*}
  For each $x = (u, v) \in X$ and $y \in (N(u) \cap N(v)) \setminus C$,
  it contains an edge $\{x, y\}$ of cost
  \begin{align*}
    \wh(\{x,y\}):=\w(\{u, s\}) + \w(\{s, v\}) - 2\w_{min}(s), 
  \end{align*}
  where
  $\w_{\min}(s) := \min_{w \in N(s)} \w(\{s, w\}).$
\end{definition}
\noindent
The edge cost in the hop graph can be interpreted as
a penalty for traversing the vertex~$s$ via a hop~$(u, s, v)$ instead of a loop.

\begin{rrule}
  \label{ourrule}
  Let $(G,\w,W)$ be a \gtsp{} instance.  We obtain a reduced instance~$(G',\w',W')$ as follows.

  \begin{enumerate}
  \item Compute a vertex cover~$C$ with $|C|\leq 2\vc$ for~$G$.
  
  \item Compute the hop graph~$\hopg$ for~$G$ and~$C$.

  \item Compute a maximum\hyp cardinality matching~$M^*$
    of minimal cost with respect to~$\wh$
    in~$\hopg$.
    
  \item Denoting by~$S$ the vertices in $V \setminus C=Y$
  matched in~$M^*$,
  the output \gtsp{}
  instance~$(G',\w',W')$
  is obtained
  by deleting from~$G$
  all vertices (and their incident edges)
  except for~$C\cup S$
  and putting
  \[
    W'=W-2\smashoperator{\sum_{v\in V\setminus (C\cup S)}}\w_{\min}(v).
  \]
  \end{enumerate}
  The new weight function~$\w'$ is the same as $\w$
  on all edges that are not deleted.
\end{rrule}
\noindent

\noindent
All steps work in polynomial time \citep{GJ79,Sch03}.
Moreover,
from any TSP tour~$\tour'$ for~$G'$,
a TSP tour~$P$ for~$G$ can be obtained by adding to~$P'$
a loop~$(w,v,w)$
for each vertex~$v$ present in~$G$
but missing in~$G'$,
where $w\in C$ minimizes $\w(\{w, v\})$.

\section{Problem kernel size analysis}
\label{sec:size}
\noindent
In this section,
we prove that the graph returned
by \cref{ourrule} satisfies
the size bounds stated by \cref{thm}(i--iii).

\begin{lemma}\label{lem:size}
  The graph returned by \cref{ourrule}
  satisfies \cref{thm}\eqref{thm:1},
  \eqref{thm:2}, and \eqref{thm:3}.
\end{lemma}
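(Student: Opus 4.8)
The plan is to prove the three bounds separately, the combinatorial core being a tight estimate on $|S|=|M^*|$, the number of non-cover vertices that survive the reduction. I would first establish the key inequality $|S|\le\vc(\vc-1)$. Since $S$ is exactly the set of $Y$-vertices saturated by the maximum-cardinality matching~$M^*$, we have $|S|=|M^*|=\nu(\hopg)$, the matching number of the bipartite \hopgt{}~$\hopg$. By König's theorem, $\nu(\hopg)$ equals the minimum size of a vertex cover of~$\hopg$, so it suffices to exhibit one vertex cover of~$\hopg$ of size at most $\vc(\vc-1)$. The crucial idea, and the main obstacle, is to route this through a \emph{minimum} vertex cover $C^\ast$ of~$G$ with $|C^\ast|=\vc$ (which exists as an analysis tool, even though the algorithm only ever computes the $2$-approximate cover~$C$): the naive estimate $|S|\le|X|=|C|(|C|-1)$ uses~$C$ directly and loses a factor of about four, giving only $\approx 4\vc^2$ vertices instead of $\vc^2+\vc$.

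Concretely, I would set $Z:=\{(u,v)\in X : u,v\in C^\ast\}\cup\{s\in Y : s\in C^\ast\}$ and check that $Z$ covers every edge of~$\hopg$. For an edge $\{(u,v),s\}$, the vertex $s$ is adjacent in~$G$ to both $u$ and $v$, so the edges $\{s,u\}$ and $\{s,v\}$ are both covered by~$C^\ast$; hence either $s\in C^\ast$, whence $s\in Z$, or else $u,v\in C^\ast$, whence $(u,v)\in Z$. Writing $p:=|C^\ast\cap C|$ and $q:=|C^\ast\cap Y|$ so that $p+q=\vc$, we obtain $|Z|\le p(p-1)+q$, and the elementary estimate $p(p-1)+q\le(p+q)(p+q-1)=\vc(\vc-1)$ (whose difference is $q(2p+q-2)\ge0$ for $\vc\ge2$, the cases $\vc\le1$ being trivial) yields $|S|=\nu(\hopg)\le|Z|\le\vc(\vc-1)$.

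Parts (i) and (ii) then follow by bookkeeping. The returned graph has vertex set $C\cup S$ with $|C|\le2\vc$, so its order is at most $2\vc+\vc(\vc-1)=\vc^2+\vc$, proving~(i). For the edges, $C$ is a vertex cover of the returned graph and $S\subseteq V\setminus C$ is independent, so every edge lies either inside~$C$ or between $C$ and~$S$; this gives at most $\binom{|C|}{2}+|C|\cdot|S|\le\binom{2\vc}{2}+2\vc\cdot\vc(\vc-1)=2\vc^3-\vc$ edges, proving~(ii).

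For part~(iii), the only quantity not yet controlled is the magnitude of the weights, which the reduction rule keeps unchanged from the input and which may be astronomically large. Here I would append a standard weight-reduction step (Frank–Tardos): applied to the $N=O(\vc^3)$ retained edge weights together with~$W'$, it replaces them by integers of bit-length $O(N^3)=O(\vc^9)$ while preserving the comparison between all candidate tours, so that both equivalence and approximation ratios are untouched. Multiplying the $O(\vc^3)$ weights by their $O(\vc^9)$ bit-length gives total bit-size $O(\vc^{12})$, proving~(iii). I expect steps (i) and (iii) to be routine once the matching bound is in hand; the genuine work is the $|S|\le\vc(\vc-1)$ estimate, where the one delicate point is remembering to include the vertices of $Y\cap C^\ast$ in the cover~$Z$, since a matched vertex~$s$ may itself lie in~$C^\ast$.
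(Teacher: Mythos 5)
Your proof reaches the paper's bounds by a genuinely different route, and for $\vc\ge2$ it is in fact more careful than the paper's own argument. The paper bounds $|S|$ by writing $|S|\le|X|=\vc\cdot(\vc-1)$, which is precisely the ``naive estimate'' you reject: since the rule only guarantees $|C|\le2\vc$, literally $|X|=|C|\cdot(|C|-1)$ can be as large as $2\vc(2\vc-1)$, and the paper's equation presupposes $|C|=\vc$. Your detour through K\H{o}nig's theorem---exhibiting the cover~$Z$ built from a \emph{minimum} vertex cover~$C^\ast$ of~$G$, used purely as an analysis tool---is exactly what is needed to justify $|S|\le\vc(\vc-1)$ when $C$ is only a $2$-approximate cover. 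The remaining bookkeeping for (i) and (ii) coincides with the paper's, and your Frank--Tardos step for (iii) is the same device the paper invokes as Lemma~13 of \citet{BCK+22}, with identical arithmetic ($O(\vc^3)$ numbers of bit-length $O(\vc^9)$ each).

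The gap is the case you wave off as trivial: $\vc=1$ with $p=0$, $q=1$ is not trivial---it is where the claim fails. Let $G$ be the path with edges $\{u,c\}$ and $\{c,v\}$, so $\vc=1$ and $C^\ast=\{c\}$, and let step~1 return the legitimate cover $C=\{u,v\}$ of size $2\le2\vc$. Then $Y=\{c\}$, the hop graph has the edges $\{(u,v),c\}$ and $\{(v,u),c\}$, a maximum matching saturates~$c$, so $S=\{c\}$ and the reduced graph retains all $3>\vc^2+\vc=2$ vertices. Thus $|Z|=1>0=\vc(\vc-1)$, and no refinement of your inequality can close this, because the conclusion itself is false for this admissible choice of~$C$. (The same example defeats the paper's literal step $|X|=\vc\cdot(\vc-1)$, so the issue is inherited, not introduced by you.) You therefore need an explicit proviso: for instance, detect $\vc\le1$ by testing all single-vertex covers and use such a cover as~$C$ (then $X=(C)_2=\emptyset$, hence $S=\emptyset$), or require $C$ to be the endpoint set of a maximal matching, which for $\vc=1$ necessarily contains~$c$ and again forces $S=\emptyset$. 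With such a proviso your argument is complete; without it, the case $\vc=1$ remains a genuine hole.
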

\begin{proof}
\cref{ourrule}
retains only the vertices in~$C \cup S$.
We know that $|C|\leq2\vc$.
Moreover,
each vertex in~$S$
is matched to some vertex in~$X=(C)_2$,
and thus $|S|\leq|X|=\vc\cdot(\vc-1)$.
Thus,
the remaining graph has at most
\[
  2\vc{} + \vc\cdot(\vc-1)=\vc^2+\vc\quad\text{vertices}.
\]
\looseness=-1
Each remaining edge
has both endpoints in~$C$
or one endpoint in~$C$ and one endpoint in~$S$.
That is,
there are at most
\[
  \binom{2\vc}{2}+  2\vc\cdot\vc\cdot(\vc-1)=2\vc^3-\vc\quad\text{edges}.
\]
Applying Lemma~13 of \citet{BCK+22},
we can reduce the bit\hyp size to
$O((\vc^3)^4)=O(\vc^{12})$.
\end{proof}

\section{Correctness}
\label{sec:correct}
\noindent
To prove the correctness of \cref{ourrule},
we will show that there is an optimal TSP tour~$P$
such that all vertices of~$V\setminus C$
traversed by~$P$ via hops are contained
in the set~$S$, which is not deleted.
Equivalently,
all vertices deleted by
\cref{ourrule} are traversed by~$P$ via loops,
whose costs are easily computable.

To prove this,
in addition to the hop graph~$H$,
which represents all possible hops and their penalties in comparison to loops,
we study the (multi)graph~$H^P$ 
representing the hops 
actually made by~$P$. %

\begin{definition}
  Let~$\hopg^P$ be a multigraph on the same vertex set
  as the hop graph~$H$
  and containing each edge $\{x,y\}$ of~$H$,
  where $x = (u, v) \in X$ and $y \in Y$,
  with a multiplicity
  that equals the number of hops $(u, y, v)$ in the tour~$P$.
\end{definition}

\noindent
The following lemma shows that
we can actually assume
$H^P$~to be an ordinary graph,
and a simple one at that:
every connected component of the graph~$H^P$
forms a ``star'' with
a vertex~$y \in Y$
in the center and only attached to
degree-one nodes in~$X$.

\begin{lemma}
\label{lem:cardinality-of-T}
Let $P$ be an optimal TSP tour minimizing the number of hops.
Then, each vertex in~$X=(C)_2$
has degree at most one in~$H^{P}$.
\end{lemma}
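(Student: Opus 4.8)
The plan is an exchange argument: assuming the conclusion fails, I would locally reroute $P$ into an optimal tour with strictly fewer hops, contradicting the minimality in the choice of $P$. Concretely, suppose some $x=(u,v)\in X$ had degree at least two in $H^P$. By the definition of $H^P$, this means that $P$ traverses two hops $(u,s,v)$ and $(u,s',v)$, at two distinct positions of the closed walk and with possibly $s=s'$. Since $u\ne v$, I may write $P$ cyclically as
\[
  P=(u,s,v)\conc A\conc(u,s',v)\conc B,
\]
where $A$ and $B$ are the two subwalks of $P$ joining the hops, each running from a copy of $v$ to a copy of $u$.

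Next I would introduce two candidate reroutings obtained by reversing one of these two subwalks:
\begin{align*}
  P_1&=(u,s,u)\conc A^R\conc(v,s',v)\conc B,\\
  P_2&=(u,s',u)\conc B^R\conc(v,s,v)\conc A.
\end{align*}
Both are closed walks visiting every vertex that $P$ visits: beyond the edges of $A$, $B$, and the two reversed connecting subwalks, they use only $\{u,s\},\{v,s'\}$ (for $P_1$) and $\{u,s'\},\{v,s\}$ (for $P_2$), all of which already occur in the two hops of $P$ and hence lie in the graph. I would then verify the crucial structural claim that each of $P_1,P_2$ has exactly two fewer hops than $P$: reversing a subwalk turns every internal hop $(a,t,b)$ into the hop $(b,t,a)$ and therefore leaves the number of hops inside the reversed part unchanged, while the reroute replaces the two boundary hops $(u,s,v),(u,s',v)$ by the loops $(u,s,u),(v,s',v)$ (respectively $(u,s',u),(v,s,v)$) and creates no new hop.

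The quantitative heart of the argument is that $P_i$ and $P$ differ in cost only on these four edges. Writing $a=\w(\{u,s\})$, $b=\w(\{s,v\})$, $c=\w(\{u,s'\})$, $d=\w(\{s',v\})$, the tour $P$ spends $a+b+c+d$ on them, whereas $P_1$ spends $2a+2d$ and $P_2$ spends $2b+2c$. Because $(2a+2d)+(2b+2c)=2(a+b+c+d)$, at least one of $P_1,P_2$ spends at most $a+b+c+d$, so its total weight is at most $\w(P)$. As $P$ is optimal, that rerouting is an optimal TSP tour as well, but with two fewer hops, contradicting that $P$ minimizes the number of hops. Hence every $x\in X$ has degree at most one in $H^P$.

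I expect the main obstacle to be the bookkeeping in the structural claim rather than the cost estimate: one must check that reversing a subwalk neither destroys the spanning property nor silently creates extra hops, and that the two boundary hops genuinely become loops. The cost part, by contrast, reduces to the clean averaging identity above, which also covers the degenerate case $s=s'$: there $P_1$ and $P_2$ have equal four-edge cost $2a+2b$, so both reroutings are cost-neutral and the contradiction follows immediately.
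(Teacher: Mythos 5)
Your proof is correct and shares the paper's high-level plan---an exchange argument deriving a contradiction to the hop-minimality of~$P$---but the exchange step itself is genuinely different. The paper makes a \emph{single} rerouting: it writes $P=P_1\conc(u,s_1,v)\conc P_2\conc(u,s_2,v)\conc P_3$, forms $\tilde P=P_1\conc P_2^R\conc P_3$ (deleting both hops), and re-inserts $s_1,s_2$ as loops at their \emph{cheapest} neighbors, so its cost bound rests on the inequality $2\w_{\min}(s_i)\le\w(\{u,s_i\})+\w(\{s_i,v\})$. You instead recycle the four hop edges themselves as loops at~$u$ and at~$v$, and since neither of your two symmetric reroutings need be cheap on its own, you close the argument with the averaging identity $\w(P_1)+\w(P_2)=2\w(P)$. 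Your route is more self-contained: it never invokes $\w_{\min}$, and the averaging disposes of the degenerate case $s=s'$ with no extra work. The paper's route yields a rerouted tour that is never costlier than either of yours (in your notation, $2\w_{\min}(s)+2\w_{\min}(s')\le\min\{2a+2d,\,2b+2c\}$) and matches the mechanism used throughout the rest of the paper, where deleted vertices are re-inserted as cheapest loops (\cref{ourrule}, \cref{lem:main}); but that extra strength is not needed for this lemma. Your structural bookkeeping is also sound: both hops have the same orientation by the definition of~$H^P$, they are disjoint subwalks because $u\ne v$, reversal of a subwalk with both endpoints in~$C$ maps hops to hops and loops to loops, and every junction in your reroutings lies in~$C$, so the hop count indeed drops by exactly two.
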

\begin{proof}
  Suppose,
  towards a contradiction,
  that that $P$~contains
  two hops $(u, s_1, v)$ and $(u, s_2, v)$ for $(u, v) \in (C)_2$,
  that is,
  $P$~can be decomposed
  into $\allowbreak P = P_1 \conc (u, s_1, v) \conc P_2 \conc (u, s_2, v) \conc P_3$
  (possibly, $s_1=s_2$).
  Then
  $\tilde{P} = P_1 \conc P_2^R \conc P_3$ is a closed walk
  that visits all vertices of~$G$,
  except, maybe, $s_1$ and $s_2$.

  For $i\in\{1,2\}$,
  we add %
  to~$\tilde{P}$
  a loop~$(w_i, s_i, w_i)$,
  where $w_i\in N(s_i)\subseteq C$
  minimizes $\w(\{w_i, s_i\})$,
  and get a valid TSP tour~$\hat{P}$ with
  \begin{align*}
    \w(\hat{P}) &= \w(\tilde{P}) + 2\w_{{\min}} (s_1) + 2\w_{{\min}} (s_2)\\
                &= \w(P) + 2\w_{{\min}} (s_1) + 2\w_{{\min}} (s_2)\\
                &\phantom{{}=\w(P)}
                  - \w(\{u, s_1\}) - \w(\{v, s_1\})\\
                &\phantom{{}=\w(P)} - \w(\{u, s_2\}) - \w(\{v, s_2\})\leq \w(P).
  \end{align*}
  Thus,
  the cost of $\hat{P}$ is not larger than that of~$P$,
  yet $\hat{P}$ has less hops than~$P$,
  which contradicts the choice of $P$.
\end{proof}

\noindent
We show that there is an optimal TSP tour
such that the set of vertices $S\subseteq V\setminus C$
not deleted by \cref{ourrule}
contains all vertices that~$P$ traverses via hops.

\begin{lemma}
\label{lem:main}
There exists an optimal TSP tour $P^*$ such that $S\subseteq V\setminus C=Y$
contains all vertices that $P^*$ traverses via hops.
\end{lemma}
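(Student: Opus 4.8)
The plan is to start from an optimal TSP tour~$P$ that minimizes the number of hops, so that \cref{lem:cardinality-of-T} applies and every $x\in X$ carries at most one hop; equivalently, the hops of $P$ form a set of edges of~$\hopg$ in which every $X$-vertex has degree at most one. Choosing one such edge for each $Y$-vertex that $P$ hops yields a matching~$M_P$ in~$\hopg$ whose saturated $Y$-vertices are exactly the hopped set~$T$. I would then prove the lemma by repeatedly decreasing the potential $\Phi(P):=\sum_{y\in Y\setminus S}\deg_{\hopg^P}(y)$, the number of hops that land on deleted vertices, while keeping the tour optimal and hop-minimal; once $\Phi=0$ no hop lands outside~$S$ and the resulting tour is the desired~$P^*$.

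Suppose $\Phi(P)>0$ and fix a hopped vertex $y\in T\setminus S$. Since $y\notin S$, it is exposed by~$M^*$ but saturated by~$M_P$, so in $M_P\triangle M^*$ it is the endpoint of an alternating path. Using that $M^*$ has maximum cardinality together with a parity count in the bipartite graph~$\hopg$, I would argue that this path cannot terminate at an $X$-vertex (that would produce an $M^*$-augmenting path) and therefore ends at a vertex $y'\in S$ saturated by~$M^*$ but not by~$M_P$; write the path as $y=y_0,x_1,y_1,x_2,\dots,x_r,y_r=y'$ with $\{y_{i-1},x_i\}\in M_P$ and $\{x_i,y_i\}\in M^*$.

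The tour surgery is to reroute the hops along this path: at each position $x_i=(a_i,b_i)$, which by \cref{lem:cardinality-of-T} carries the unique hop $(a_i,y_{i-1},b_i)$, replace $y_{i-1}$ by~$y_i$ (legal because $\{x_i,y_i\}\in M^*\subseteq E(\hopg)$ means $y_i\in N(a_i)\cap N(b_i)$); in addition, delete one loop of the newly hopped vertex~$y'$ and, should $y$ be left with no hop, insert a minimum loop at~$y$. I expect the cost bookkeeping to be the crux. Writing each hop cost as $\w(\{a,y\})+\w(\{y,b\})=2\w_{\min}(y)+\wh(\{x,y\})$, the $\w_{\min}$ terms telescope and cancel against the loop adjustments, leaving a net change of at most $\wh(M^*\cap\text{path})-\wh(M_P\cap\text{path})$. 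The key point is that $M^*\triangle\text{path}$ is again a maximum-cardinality matching (the path contains equally many $M^*$- and $M_P$-edges), so minimality of~$M^*$ forces $\wh(M^*\cap\text{path})\le\wh(M_P\cap\text{path})$; hence the surgery does not increase the weight and the new tour is still optimal.

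Finally I would verify that the surgery leaves the number of hops unchanged (each position $x_1,\dots,x_r$ still carries exactly one hop), so the new tour is again optimal and hop-minimal and \cref{lem:cardinality-of-T} keeps applying, and that $\Phi$ drops by one (the hop at~$x_1$ moves off $y\in Y\setminus S$ onto $y_1\in S$, while all intermediate~$y_i$ and the endpoint~$y'$ lie in~$S$). Iterating drives $\Phi$ to zero. The main obstacle, and the only place where the minimum-cost choice of~$M^*$ enters, is establishing the weight inequality $\wh(M^*\cap\text{path})\le\wh(M_P\cap\text{path})$ through the exchange with the equal-cardinality matching $M^*\triangle\text{path}$; the remainder is the careful loop accounting and the parity argument pinning down where the alternating path ends.
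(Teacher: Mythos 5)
Your proposal is correct and takes essentially the same route as the paper's proof: both start from an optimal hop\hyp minimal tour so that \cref{lem:cardinality-of-T} applies, build an alternating path in the hop graph between hop edges and $M^*$\hyp edges starting at the offending vertex, use maximum cardinality of~$M^*$ to show the path ends at an $M^*$\hyp matched but unhopped (loop\hyp traversed) vertex of~$S$, shift the hops one step along the path, and bound the cost change by exchanging the path's $M^*$\hyp edges for its hop edges and invoking the minimum\hyp cost choice of~$M^*$. Your packaging of the hops into an explicit matching~$M_P$ and the symmetric difference $M_P\triangle M^*$, and your potential~$\Phi$ counting hops landing outside~$S$ (slightly cleaner than the paper's count of hopped vertices), are only cosmetic refinements of the paper's hand\hyp built path and iteration.
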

\begin{proof}
  Let $P$~be an optimal TSP tour
  with a minimum number of hops.
  If there is no vertex~$s_0\in Y\setminus S$
  traversed by~$P$ via a hop,
  then the lemma holds.
  Otherwise,
  consider any path
  in the hop graph~$H$
  of the form
  \begin{align}
  \Bigl(s_0, (u_0, v_0), s_1, (u_1, v_1), s_2, (u_2,v_2),\ldots \Bigr)\label{path}
  \end{align}
  such that $\{s_i, (u_i, v_i)\}$ is an edge of $H^P$
  and $\{(u_i, v_i), s_{i+1}\}$ is an edge of the matching~$M^*$ in~$H$.
  We show that this path has no cycles
  by showing that it enters each vertex at most once.
  \begin{itemize}
  \item The path enters a $s_i$ only via
    a matching edge $\{(u_{j},v_{j}), s_i\}\in M^*$.
    There can be only one matching edge incident to~$s_i$.
  \item No matching edge is incident to $s_0$,
    since $s_0\notin S$,
    and therefore the path never enters~$s_0$ again.
  \item The path enters $(u_i,v_i)$
    only via an edge $\{s_j, (u_i,v_i)\}$ of~$H^P$.
    By \cref{lem:cardinality-of-T},
    there is only one such edge %
    in~$H^P$.
  \end{itemize}
  We can thus consider any path of the form \eqref{path}
  of maximum length.
  Its last edge belongs to the matching~$M^*$:
  otherwise,
  exchanging matching edges and non\hyp matching edges along the path
  would increase the cardinality of~$M^*$,
  contradicting the choice of~$M^*$.
  Thus,
  the path ends at some vertex~$s_k$, which is isolated in~$H^P$,
  and therefore traversed by~$P$ via a loop.
  The path can be partitioned into two disjoint matchings
  \begin{align*}
    M_1&:=\{\{s_i,(u_i,v_i)\}\mid 0\leq i<k\}\text{ and}\\
    M_2&:=\{\{s_{i+1},(u_i,v_i)\}\mid 0\leq i<k\}\subseteq M^*.
  \end{align*}
  Now turn the tour~$P$ into a tour~$P'$ by
  removing the loop visiting~$s_k$,
  for each $i\in\{0,\dots,k-1\}$ replacing the
  hop~$(u_i,s_i,v_i)$ by a hop~$(u_i,s_{i+1},v_i)$,
  and adding a loop~$(w,s_0,w)$
  for the $w\in V \setminus C$ minimizing $\cost(\{s_0,w\})$.
  By the definition of~$\wh$,
  for the path $P'$,
  \[
    \w(P') - \w(P) = \wh(M_2) - \wh(M_1).
  \]
  Notice that, since $M^*$ is of minimum cost with respect to~$\wh$,
  \[
    \wh(M^*) \leq \wh((M \setminus  M_2) \cup M_1) = \wh(M^*) - \wh(M_2) + \wh(M_1),
  \]
  so that $\wh(M_2) \leq \wh(M_1)$. Thus,
  $P'$~is also optimal.
  The number of hops of~$P$ and~$P'$
  is the same,
  yet $P'$~visits strictly less nodes from $Y \setminus S$ via hops,
  namely~$s_0$.
  Thus, we can continue this operation
  until arriving at an optimal TSP tour~$P^*$
  that does not visit nodes from $Y \setminus S$ via hops.
\end{proof}

\noindent
Finally,
we can conclude this work by proving:

\begin{lemma}
  \label{lem:corr}
  \cref{ourrule} is correct.
\end{lemma}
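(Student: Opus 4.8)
The plan is to establish correctness in the standard kernelization sense: the reduced instance $(G',\w',W')$ is a yes-instance if and only if the original instance $(G,\w,W)$ is, and moreover any $\alpha$-approximate solution lifts back with the same ratio. Since the whole construction preserves tour weights additively through the loop-completion map, I would phrase correctness as an equality (up to the additive constant $2\sum_{v\in V\setminus(C\cup S)}\w_{\min}(v)$) between the optimal tour weight in $G$ and the optimal tour weight in $G'$. Concretely, writing $K:=2\sum_{v\in V\setminus(C\cup S)}\w_{\min}(v)$, the claim I would prove is $\OPT(G,\w)=\OPT(G',\w')+K$, which immediately gives the yes/no equivalence because $W'=W-K$.

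The proof splits into two inequalities. For one direction, I would take an arbitrary TSP tour~$P'$ in~$G'$ and apply the lifting map already described after \cref{ourrule}: add a loop $(w,v,w)$ with $w\in C$ minimizing $\w(\{w,v\})$ for each deleted vertex~$v\in V\setminus(C\cup S)$. Each such loop contributes exactly $2\w_{\min}(v)$, so the resulting tour~$P$ visits all of~$V$ and satisfies $\w(P)=\w'(P')+K$; in particular $\OPT(G,\w)\le\OPT(G',\w')+K$. For the reverse direction, I would invoke \cref{lem:main} to obtain an optimal TSP tour~$P^*$ in~$G$ all of whose hop-traversed vertices lie in~$S$. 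Hence every vertex of $V\setminus(C\cup S)$ is traversed by~$P^*$ via a loop, and by the observation in \cref{sec:dr} we may assume each such loop uses a minimum-weight neighbor, contributing exactly $2\w_{\min}(v)$. Deleting these loops yields a closed walk in~$G'$ visiting all of $C\cup S$, i.e.\ a TSP tour~$P'$ for~$G'$ with $\w'(P')=\w(P^*)-K=\OPT(G,\w)-K$, giving $\OPT(G',\w')\le\OPT(G,\w)-K$. Combining the two inequalities yields the desired equality.

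For the approximation guarantee of \cref{rem:1apx}, I would observe that the lifting map from the first direction is weight-preserving up to the fixed additive constant~$K$ and is independent of which tour it is applied to. Thus if $P'$ is an $\alpha$-approximate tour for~$G'$, then $\w(P)=\w'(P')+K\le\alpha\cdot\OPT(G',\w')+K\le\alpha(\OPT(G',\w')+K)=\alpha\cdot\OPT(G,\w)$, where the middle step uses $\alpha\ge1$ and $K\ge0$ and the last step uses the established equality. Hence~$P$ is $\alpha$-approximate for~$G$, and it is computable in polynomial time.

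The main obstacle is not any single computation but making sure the additive constant~$K$ is accounted for consistently and that the reverse inequality genuinely relies on \cref{lem:main} rather than merely on the lifting map: the lifting direction alone only shows that $G'$ is ``no harder'' than~$G$, whereas optimality requires that some optimal tour in~$G$ actually restricts to a tour in~$G'$ after removing only cheap minimum-weight loops. The delicate point is verifying that the loops removed from~$P^*$ each cost precisely $2\w_{\min}(v)$ so that they match the constant~$K$ subtracted in~$W'$; this is exactly what \cref{lem:main} together with the minimum-neighbor assumption for loops guarantees, so once those are in hand the remaining arithmetic is routine.
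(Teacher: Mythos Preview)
Your proposal is correct and follows essentially the same approach as the paper: lift any tour~$P'$ in~$G'$ to a tour in~$G$ by adding minimum\hyp weight loops (giving one inequality), and invoke \cref{lem:main} to restrict an optimal tour in~$G$ to~$G'$ by deleting exactly those loops (giving the other). Your framing via the additive constant~$K$ and the explicit equality $\OPT(G,\w)=\OPT(G',\w')+K$ is slightly more formal than the paper's presentation, and you are a bit more careful about justifying why the removed loops cost exactly $2\w_{\min}(v)$ (via optimality of~$P^*$), but the argument is the same.
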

\begin{proof}
  Let $P'$~be a TSP tour for~$G'$.
  Then
  adding to~$P'$ a loop~$(w,s,w)$
  for each vertex~$s$ present in~$G$ but missing in~$G'$,
  where $\w(\{w, s\}) = \w_{\min}(s)$,
  gives a TSP tour~$P$ for~$G$
  with cost
  \[
    \w(P)=\w(P')+2\smashoperator{\sum_{s\in V\setminus (C\cup\fS)}}\w_{\min}(s).
  \]
  In the other direction,
  let $P$~be an optimal TSP tour for~$G$ from \cref{lem:main}. Since vertices in $V \setminus C$ have no edges between them, the only way
  any vertex~$s\in V \setminus (C \cup S)$
  can be visited by~$P$ is via a loop $(w,s,w)$.
  Removing these loops from~$P$ gives a TSP tour~$P'$
  for~$G'$ and its cost is
  \[
    \w(P')= \w(P)-2\smashoperator{\sum_{s\in V\setminus (C\cup\fS)}}\w_{\min}(s).\qedhere
  \]
\end{proof} 

\noindent
\cref{thm} now follows from
\cref{lem:size,lem:corr}.
Regarding \cref{rem:1apx},
the proof of \cref{lem:corr}
shows that \emph{any} solution~$P'$ for the reduced
instance~$(G',\w',W')$ can be turned into a solution~$P$
of cost
$\w(P)=\w'(P')+\Delta$ for the original instance~$(G,\w,W)$, where
\[
  \Delta:=2\smashoperator{\sum_{s\in V\setminus (C\cup\fS)}}\w_{\min}(s).
\]
Thus,
if $P^*$~is an optimal  and
$P'$~is an $\alpha$\hyp approximate solution for~$G'$,
we turn it into a solution~$P$ for~$G$ of cost
\begin{align*}
  \w(P)&=\w'(P')+\Delta\leq \alpha\w'(P^*)+\Delta\leq \alpha(\w'(P^*)+\Delta).
\end{align*}
Note that,
by
\cref{lem:corr},
$\w'(P^*)+\Delta$ is precisely the cost of an optimal solution
for $G$.  Thus, $P$~is $\alpha$\hyp approximate for~$G$ also.

\paragraph{Acknowledgments}
We thank the anonymous referees from \emph{Operations Research Letters} for their valuable comments.

\bibliographystyle{gtsp}
\bibliography{gtsp}

\end{document}